\newcommand{\Ke}{\kern0.4em}
\newcommand{\ke}{\kern0.2em}
\newcommand{\disp}{\displaystyle}
\newcommand{\from}{\colon}
\newcommand{\too}{\longrightarrow}
\newcommand{\mapstoo}{\longmapsto}
\newcommand{\q}{\quad}
\newcommand{\bb}[1]{\mathbb#1}
\newcommand{\sm}{\!\setminus\!}
\newcommand{\supp}{\mathop{\rm supp}\nolimits}
\newcommand{\End}{\mathop{\rm End}\nolimits}
\newcommand{\Hom}{\mathop{\rm Hom}\nolimits}
\newcommand{\Herm}{\mathop{\rm Herm}\nolimits}
\newcommand{\by}{\!\times\!}
\renewcommand{\phi}{\varphi}
\newtheorem{thm}{Theorem}
\newtheorem{prop}[thm]{Proposition}
\newtheorem{la}[thm]{Lemma}
\newtheorem{cor}[thm]{Corollary}
\begin{document}
\bibliographystyle{plain}
\frontmatter          
\pagestyle{headings}  
\addtocmark{Automorphisms of stabilizer codes} 
%
%
\mainmatter              
\title{Automorphisms of stabilizer codes}
\titlerunning{Automorphisms of stabilizer codes}  
%
\author{Klaus Wirthm\"uller}
\authorrunning{Klaus Wirthm\"uller}   
%
\tocauthor{Klaus Wirthm\"uller}
\institute{Fachbereich Mathematik der
           Technischen Universit\"at Kaiserslautern,\\
           Postfach 3049, 67653 Kaiserslautern, Germany\\
           \email{wirthm@mathematik.uni-kl.de}}

\maketitle              

\begin{abstract}
We study the automorphisms of binary stabilizer codes and states. We prove that they almost always form a solvable group, and thereby shed new light on the fact that there is no universal set of transversal gates. We also determine the connected component of the automorphism group.

\smallskip
PACS number: 03.67.-a
\end{abstract}
\section{Introduction}
\label{int}
\noindent
A stabilizer code is an error-correcting quantum code that can be realised as the common fixed point set of mutually commuting products of Pauli operators. These codes, first described in \cite{gottesman_class_error_corr_codes}, are a widely used tool in fault-tolerant quantum computation. To take advantage of them it is essential to be able to implement quantum gates that act directly on encoded states, if possible by operators that act near-locally on each qubit of the code words. In the case of true locality such gates are called transversal \cite{gottesman_thesis}\ts; they always include a full set of Pauli operators on the encoded qubits. The desire to have many transversal gates points to the automorphism group of a stabilizer code as an object of interest. In fact it has been shown in \cite{transversality_vs_universality}, and in \cite{universal_not_even-one} for a more general setting, that there are not sufficiently many transversal gates in order to achieve universal computation\ts: for that purpose the automorphism group is never as large as one might wish.

From a different point of view stabilizer codes are quantum states rather than codes, mixed in general but including pure ones. While the natural classification of quantum states would be by local unitary equivalence, for stabilizer states the so-called Clifford equivalence is much more accessible since it takes the additional structure into account. The relation between these classifications has been intensively studied \cite{lu_vs_lCliff_stab,lu_vs_lCliff_stab_graph,lulc_diag_qforms}, and it has been eventually seen \cite{lu_lc_is_false} that they fail to coincide. The discrepancy too is related to the automorphism group --- ironically to the fact that it can be larger than one might expect.

The purpose of this note is twofold. Theorem \ref{pointwise_groups}, the main statement of Section \ref{small}, implies that the automorphism group of a stabilizer code is, with explicitly known exceptions, a solvable group. The fact that no universal set of transversal gates can exist, now becomes an immediate corollary and acquires a new conceptual meaning this way. Theorem \ref{pointwise_groups} itself occurs implicitly in  \cite{transversality_vs_universality}, and our proof follows closely the reasoning there, which in turn uses ingredients from \cite{quant_codes_min_dist_two,lu_vs_lCliff_stab,lu_vs_lCliff_stab_graph,lulc_diag_qforms}. We think it worthwhile to assemble it here as a whole since it takes advantage of several shortcuts, as in the proof of Rains' Lemma \cite{quant_codes_min_dist_two} which plays a basic role.

In Section \ref{connected_group} we determine, for a given stabilizer code, the connected component of the unit element of its automorphism group. It turns out that this connected automorphism group --- necessarily a compact torus --- is largely controlled by those elements of the stabilizer groups which act on pairs of qubits\ts: see Theorem \ref{connected} for the precise statement. It is then easy to recognise the states with a finite automorphism group as well as the codes which allow but a finite number of transversal gates.
\section{The automorphism group is solvable}
\label{small}
\noindent
We work in the setting of binary stabilizer states and codes over a finite set $P$ of qubits. We write the standard base of the binary vector space $V:=\bb F_2\oplus\bb F_2$ as $(z,x)$ and equip $V$ with its unique symplectic form $\omega$. The basic objects are vector subspaces $L\subset V^P$ which are isotropic\ts: thus $\omega$ vanishes identically on $L\by L$. Every such subspace $L\subset V^P$ of dimension $|P|\!-\!d$ determines a class of stabilizer codes in the state space ${\cal H}^{\otimes P}:=(\bb C^2)^{\otimes P}$, encoding $d$ qubits. Thus the Lagrangian case $d=0$ is that which correspond to (pure) stabilizer \textit{states}.

In order to fix a stabilizer code within its class let
\begin{displaymath}
  LU(P)\subset U({\cal H}^{\otimes P})
\end{displaymath}
denote the group of local unitary automorphisms\ts; it is the quotient group of the Cartesian product $U(2)^P$ by the co-diagonal of scalars ${(\lambda_s)}_{s\in P}$ with $\prod\lambda_s=1$. Recall that the Pauli group is the subgroup $\tilde V^P\subset LU(P)$ generated by all tensor products of Pauli operators and arbitrary scalars\ts; the commutators in this group are given by the values of the symplectic form $\omega$. The normaliser of $\tilde V^P$ in $LU(P)$ or --- more often --- its quotient by the subgroup of scalars is known as the Clifford group $\hbox{Clif}\,(P)$. It admits a canonical representation
\begin{displaymath}
  \xymatrix{{0}\ar@{->}[r] &{V^P\strut}\ar@{->}[r]
           &{\hbox{Clif}\,(P)}\ar@{->}[r]
           &{L\hbox{Sp}(P)}\ar@{->}[r]&{1}\\}
\end{displaymath}
as an extension of the local symplectic group of $V^P$ by the additive group $V^P$.

If $L\subset V^P$ is isotropic the natural projection $\tilde V^P\to V^P$ admits (homomorphic) sections over $L$. Any choice $\sigma\from L\to\tilde V^P$ of such a section singles out the \textit{stabilizer group} $\tilde L:=\sigma(L)\subset\tilde V^P$ and thereby determines a unique stabilizer code
\begin{displaymath}
  C(L,\sigma):=\left({\cal H}^{\otimes P}\right)^{\tilde L}
              \subset{\cal H}^{\otimes P}
\end{displaymath}
as its fixed point space. If a different section $\tau$ is chosen then the quotient $\sigma/\tau$ takes values in $\{\pm1\}$ and thus defines a homomorphism $\sigma/\tau\in\Hom(L,\bb F_2)$. In terms of the symplectic complement $L^\perp\subset V^P$ of $L$ the latter space is the same as $V^P/L^\perp$, and any element $h\in\tilde V^P$ that represents $\sigma/\tau$ there will send $C(L,\sigma)$ isomorphically onto $C(L,\tau)$. In particular all codes associated to $L$ are Clifford equivalent among themselves, and we will sometimes drop $\sigma$ from the notation when its choice is irrelevant to a statement.

We are concerned with the local unitary automorphisms of $C(L,\sigma)$, or all families $\tilde g={(\tilde g_s)}_{s\in P}\in U(2)^P$ that send $C(L,\sigma)$ into itself. As this group clearly contains the $|P|$-dimensional torus of scalars we may as well divide by this subgroup\ts; using the standard identification $U(2)/S^1=SO(3)$ via the adjoint action on traceless Hermitian operators we therefore define
\begin{displaymath}
  A(L,\sigma):=\left\{g\in SO(3)^P\,\big|\,
                      \tilde g\bigl(C(L,\sigma)\bigr)\subset C(L,\sigma)\hbox{ for some }\tilde g\hbox{ representing }g\right\}
\end{displaymath}
as the \textit{automorphism group} of the code. By definition it contains a copy of $L\simeq\tilde L$, with each Pauli operator acting as a rotation by 180 degrees around the corresponding axis.

It may happen that the isotropic subspace $L\subset V^P$ is decomposable in the sense that it is a direct product $L=L_S\times L_{P\setminus S}$ for some subset $S$ of qubits with $\emptyset\neq S\neq P$, where quite generally we identify $\{v\in V^P\,|\,v_p=0\hbox{ for all }p\in P\sm S\}$ with $V^S$ and put
\begin{displaymath}
  L_S:=L\cap V^S.
\end{displaymath}
Every $L$ thus has a unique representation as a direct product $L=\prod_SL_S$ where the subsets $S\subset P$ form a partition of $P$, and each $L_S$ is indecomposable. We then clearly have corresponding decompositions
\begin{displaymath}
  C(L,\sigma)=\bigotimes_SC(L_S,\sigma|L_S)\hbox{\q and\q}
  A(L,\sigma)=\prod_SA(L_S,\sigma|L_S),
\end{displaymath}
and therefore it is sufficient to study the automorphism groups of indecomposable isotropic subspaces $L\subset V^P$.

We first treat two particularly simple exceptional cases.
\begin{enumerate}
  \item In case $|P|=1$ we have either $L=0$ --- resulting in
    $A(L)=SO(3)$, or after a suitable symplectic automorphism $L$ is spanned
    by the single vector $z\in V$. We then may assume that
    $\tilde L=\langle Z\rangle$ and therefore $C(L,\sigma)$ is spanned by
    the stabilizer state $|0\rangle$. The automorphism group clearly is
    the circle group $A(L,\sigma)=C_z\simeq SO(2)$ of rotations around the
    $z$ axis. In this case we refer to the (non-zero) Lagrangian $L$, as
    well as the unique qubit of $P$ as \textit{trivial}.
\medskip
  \item The other exceptional case is that of an indecomposable Lagrangian
    with $|P|=2$. Normalising by a local symplectic automorphism we may
    assume $L=\langle zz, xx\rangle$, and the obvious choice of $\sigma$
    leads to the Bell state $|00\rangle+|11\rangle$. In terms of the
    isomorphy ${\cal H}\otimes{\cal H}=\End{\cal H}$ it corresponds to the
    identity mapping of $\cal H$, whose isotropy group in $U(2)\times U(2)$
    is $\left\{(g,h)\,|\,gh=1\right\}$. The local unitary
    automorphism group therefore is the co-diagonal subgroup
    \begin{displaymath}
      A(L,\sigma)
        =\left\{(g,h)\in SO(3)\by SO(3)\,\big|\,gh=1\right\},
    \end{displaymath}
    itself isomorphic to $SO(3)$.
\end{enumerate}
Returning to the general case the following observation is basic.
%
\begin{la}\label{projection_to_subsets}
Let $L\subset V^P$ be isotropic, $\sigma$ a section over $L$, and let $S\subset P$ be a subset. Then the Cartesian projection
\begin{displaymath}
  SO(3)^P\owns{(g_s)}_{s\in P}\mapstoo{(g_s)}_{s\in S}\in SO(3)^S
\end{displaymath}
sends $A(L,\sigma)$ into $A(L_S,\sigma|L_S)$.
\end{la}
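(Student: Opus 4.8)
The plan is to realise each code through the orthogonal projection onto it and to connect the two projections by a partial trace over the qubits outside $S$. Since $\tilde L=\sigma(L)$ is a finite abelian group of commuting Hermitian involutions (the section being a homomorphism from the $2$-torsion group $L$), the projection onto $C(L,\sigma)=({\cal H}^{\otimes P})^{\tilde L}$ is $\Pi=|\tilde L|\inv\sum_{h\in\tilde L}h$, and likewise the projection onto $C(L_S,\sigma|L_S)$ is $\Pi_S=|\tilde L_S|\inv\sum_{h\in\tilde L_S}h$, where $\tilde L_S:=\sigma(L_S)$ is regarded as a group of operators on ${\cal H}^{\otimes S}$.

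First I would compute the partial trace $\mathop{\rm Tr}\nolimits_{P\sm S}\Pi$. Each $h\in\tilde L$ is, up to a global sign, a tensor product of Pauli operators, and $\mathop{\rm Tr}\nolimits_{P\sm S}h$ vanishes unless the factor of $h$ on every qubit of $P\sm S$ is the identity; that is, unless the $V^P$-image $v$ of $h$ satisfies $v_p=0$ for all $p\in P\sm S$. But this says exactly $v\in L\cap V^S=L_S$, hence $h=\sigma(v)\in\tilde L_S$. For such $h$ the trace over $P\sm S$ contributes the factor $2^{|P\sm S|}$, so that
\begin{displaymath}
  \mathop{\rm Tr}\nolimits_{P\sm S}\Pi
    =\frac{2^{|P\sm S|}}{|\tilde L|}\sum_{h\in\tilde L_S}h
    =c\,\Pi_S,\q c:=\frac{2^{|P\sm S|}\,|\tilde L_S|}{|\tilde L|}>0.
\end{displaymath}
Thus the partial trace of the code projection is, up to a positive scalar, the projection onto the subcode.

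The second ingredient is the equivariance of the partial trace under local unitaries: writing $\tilde g_S:=\bigotimes_{s\in S}\tilde g_s$ for the tensor factor acting on ${\cal H}^{\otimes S}$, one has $\mathop{\rm Tr}\nolimits_{P\sm S}(\tilde g\,A\,\tilde g\inv)=\tilde g_S\,(\mathop{\rm Tr}\nolimits_{P\sm S}A)\,\tilde g_S\inv$ for every operator $A$, because conjugation by the unitary $\tilde g_{P\sm S}$ on the traced-out factors leaves the partial trace there unchanged. Now I take $g\in A(L,\sigma)$ together with a representative $\tilde g$ satisfying $\tilde g\bigl(C(L,\sigma)\bigr)\subset C(L,\sigma)$; as $\tilde g$ is unitary and the code finite-dimensional, this inclusion is an equality, so $\tilde g\,\Pi\,\tilde g\inv=\Pi$. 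Applying $\mathop{\rm Tr}\nolimits_{P\sm S}$ to both sides and invoking the two facts above yields $c\,\tilde g_S\,\Pi_S\,\tilde g_S\inv=c\,\Pi_S$, whence $\tilde g_S$ preserves $C(L_S,\sigma|L_S)$. Since $\tilde g_S$ represents ${(g_s)}_{s\in S}$ in $SO(3)^S$ (the codiagonal scalar ambiguity being irrelevant), this gives ${(g_s)}_{s\in S}\in A(L_S,\sigma|L_S)$, as required.

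I expect the only genuine work to be the trace computation of the second paragraph, namely the observation that exactly the stabilizer elements supported on $S$ survive the partial trace and reassemble into $\Pi_S$; the equivariance of the partial trace and the finite-dimensional remark that ``into'' forces ``onto'' are then routine.
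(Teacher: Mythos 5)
Your argument is correct. It takes a genuinely different, though closely related, route from the paper's. The paper converts the inclusion $C(L,\sigma)\hookrightarrow{\cal H}^{\otimes P\setminus S}\otimes{\cal H}^{\otimes S}$, via the adjunction $\Hom\bigl(C(L,\sigma),{\cal H}^{\otimes P\setminus S}\otimes{\cal H}^{\otimes S}\bigr)=\Hom\bigl(C(L,\sigma)\otimes{\cal H}^{\otimes P\setminus S},{\cal H}^{\otimes S}\bigr)$, into an $LU(P)$-equivariant operator with values in ${\cal H}^{\otimes S}$, and simply asserts that the image of this operator is $C(L_S,\sigma|L_S)$; equivariance then forces the projection of an automorphism to preserve that image. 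You instead work with the orthogonal projector $\Pi$ and its partial trace. The two are two sides of the same coin: the image of the paper's operator is exactly the support of the reduced operator $\mathop{\rm Tr}\nolimits_{P\setminus S}\Pi$, so your term-by-term Pauli computation showing $\mathop{\rm Tr}\nolimits_{P\setminus S}\Pi=c\,\Pi_S$ with $c>0$ is precisely the verification that the paper leaves implicit (namely that the reduced state of a stabilizer code on $S$ is supported exactly on the subcode determined by $L_S=L\cap V^S$). What your version buys is self-containedness and explicitness: the only inputs beyond the trace computation are the equivariance of the partial trace under local unitaries (which you correctly justify by the factorisation $\tilde g=\tilde g_{P\setminus S}\otimes\tilde g_S$) and the finite-dimensional observation that $\tilde g(C)\subset C$ forces $\tilde g\Pi\tilde g\inv=\Pi$; both are routine, and the scalar ambiguity in the lift $\tilde g$ is indeed killed by conjugation. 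The paper's version is terser and more structural, but as written it does not prove its key claim about the image, whereas yours does.
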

\begin{proof}
The inclusion $C(L,\sigma)\hookrightarrow{\cal H}^{\otimes P}={\cal H}^{\otimes P\setminus S}\otimes{\cal H}^{\otimes S}$
becomes, under the identification
\begin{displaymath}
  \Hom\bigl(C(L,\sigma),
           {\cal H}^{\otimes P\setminus S}\otimes{\cal H}^{\otimes S}\bigr)
  =\Hom\bigl(C(L,\sigma)\otimes{\cal H}^{\otimes P\setminus S},
           {\cal H}^{\otimes S}\bigr)
\end{displaymath}
an operator whose image is just $C(L_S,\sigma|L_S)$.
\end{proof}
In the extreme case of a one element subset $S=\{p\}$ we write $A(L,\sigma)_p$ for the projection of $A(L,\sigma)$ in $SO(3)^{\{p\}}=SO(3)$ and call it the automorphism group \textit{at} the qubit $p$. It turns out that this group underlies strong restrictions whose nature depends on the position of $p$ with respect to the supports of elements of $L$\ts: the \textit{support} of a vector $v\in V^P$ is defined as
\begin{displaymath}
  \supp v=\left\{s\in P\,\big|\,v_s\neq0\right\}.
\end{displaymath}

One possible such restriction is given by the octahedral group generated by the rotations around the $x$, $y$, and $z$ axes by a right angle\ts: see Fig.\ts\ref{octa}. This subgroup of $SO(3)$ acts as the symmetric group on the set of the octahedron's surface normals through the origin, and in fact coincides with the Clifford group of a one qubit system.
\begin{figure}
  \begin{minipage}[t]{150pt}
  \psfrag{x}[ct][ct]{\footnotesize$x$}
  \psfrag{y}[lc][lc]{\footnotesize$y$}
  \psfrag{z}[cb][cb]{\footnotesize$z$}
  \includegraphics{octahedron}
    \caption{\label{octa} The standard octahedron}
  \end{minipage}
  \hfill
  \begin{minipage}[t]{150pt}
  \psfrag{x}[ct][ct]{\footnotesize$x$}
  \psfrag{y}[lc][lc]{\footnotesize$y$}
  \psfrag{z}[cb][cb]{\footnotesize$z$}
  \includegraphics{disk}
    \caption{\label{dihedral} $D_z$ is the symmetry group of the standard disk}
  \end{minipage}
\end{figure}
\vspace*{12pt}
\begin{prop}{\bf(Rains' Lemma)}\label{rains}
  Assume that $|P|\ge3$ and let $L\subset V^P$ be an isotropic plane such
  that for every $p\in P$ the Cartesian projection $V^P\too V^{\{p\}}$ sends
  $L$ onto $V^{\{p\}}$ . Then for every $p\in P$ the group
  $A(L)_p\subset SO(3)$ is contained in the octahedral group.
\end{prop}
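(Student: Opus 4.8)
The plan is to encode the condition ``$\tilde g$ preserves the code'' as a single identity between decomposable tensors, and then to show by a rigidity argument that this identity forces each local rotation $g_s$ to map the three coordinate axes to themselves up to sign, which is exactly the octahedral condition.

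First I would extract the consequences of the hypotheses on $L$. As $\dim L=2$ and the projection $L\too V^{\{p\}}$ is onto for every $p$, it is an isomorphism; hence no nonzero element of $L$ can vanish at any qubit, so the three nonzero vectors $a,b,c=a+b$ of $L$ all satisfy $\supp=P$, and at each $s\in P$ the triple $(a_s,b_s,c_s)$ runs through the three nonzero elements $z,x,z\!+\!x$ of $V$. Under $U(2)/S^1=SO(3)$ these are the Bloch vectors of $Z,X,Y$, i.e.\ an orthonormal frame $(\hat a_s,\hat b_s,\hat c_s)$ of $\bb R^3$ --- the coordinate axes up to sign. Writing $\Pi=\frac14\bigl(1+\sigma(a)+\sigma(b)+\sigma(c)\bigr)$ for the code projector --- here isotropy enters, making $\tilde L$ abelian so that $\Pi$ genuinely is a projector --- the inclusion $\tilde g\bigl(C(L,\sigma)\bigr)\subset C(L,\sigma)$ amounts to $\tilde g\,\Pi\,\tilde g\inv=\Pi$. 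The identity term cancels, conjugation by $\tilde g_s$ acts on the traceless part as $g_s\in SO(3)$, and every $\sigma(v)$ has full support; so both sides expand over identity-free Pauli strings and the condition becomes
\begin{displaymath}
  \sum_{v\in\{a,b,c\}}\bigotimes_{s\in P}\bigl(g_s\hat v_s\bigr)
    =\sum_{v\in\{a,b,c\}}\bigotimes_{s\in P}\hat v_s
  \q\hbox{in }(\bb R^3)^{\otimes P}.
\end{displaymath}
Let $N_s\in SO(3)$ be the matrix of $g_s$ in the frame $(\hat a_s,\hat b_s,\hat c_s)$; since this frame is the coordinate frame up to sign, the assertion $A(L)_p\subset$ octahedral group is precisely the assertion that $N_p$ is a signed permutation matrix.

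Next I would pair this identity with the decomposable tensor $\bigotimes_{s}\hat t_s$ attached to a frame string $\mathbf t=(t_s)_{s\in P}$, $t_s\in\{a,b,c\}$. Orthonormality of each frame collapses the right-hand side, and one is left with
\begin{displaymath}
  \sum_{v\in\{a,b,c\}}\prod_{s\in P}(N_s)_{t_s,v}
    =\left\{\begin{array}{ll}1&\hbox{if }\mathbf t\hbox{ is constant,}\\
                             0&\hbox{otherwise.}\end{array}\right.
\end{displaymath}
Now fix one qubit $q$ and split off its factor: for each restriction $\mathbf t'=\mathbf t|_{P\sm\{q\}}$ these relations read $N_q\,\mathbf G(\mathbf t')=\mathbf b(\mathbf t')$, where $G_v(\mathbf t')=\prod_{s\neq q}(N_s)_{t_s,v}$ and $\mathbf b(\mathbf t')$ vanishes unless $\mathbf t'$ is constant. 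Because $N_q$ is invertible, a non-constant $\mathbf t'$ forces $\mathbf G(\mathbf t')=0$, that is $\prod_{s\neq q}(N_s)_{t_s,v}=0$ for each $v$ separately; while the constant string with value $i$ gives $\prod_{s\neq q}(N_s)_{i,v}=(N_q)_{i,v}$.

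The endgame is combinatorial. Fix $v$ and let $C_s\in\bb R^3$ be the $v$-th column of $N_s$, a unit vector. As the $v$-th column of the orthogonal matrix $N_q$ is nonzero, there is an index $i$ with $\prod_{s\neq q}(C_s)_i=(N_q)_{i,v}\neq0$, whence $(C_s)_i\neq0$ for all $s\neq q$. Given any $s_0\neq q$ and any $j\neq i$, the string equal to $i$ off $s_0$ and to $j$ at $s_0$ is non-constant (here $|P|\ge3$ guarantees $|P\sm\{q\}|\ge2$), so the vanishing above yields $(C_{s_0})_j\prod_{s\neq q,s_0}(C_s)_i=0$; the second factor being nonzero forces $(C_{s_0})_j=0$. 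Thus each $C_s$ with $s\neq q$ equals $\pm e_i$; letting $v$ vary and invoking orthonormality of the columns of $N_s$, the resulting assignment of axes is a permutation, so every $N_s$ with $s\neq q$ is a signed permutation matrix. Finally, running this for two distinct choices of the excluded qubit $q$ --- possible since $|P|\ge3$ --- covers all qubits, so in particular $N_p$ is a signed permutation matrix and $A(L)_p$ lies in the octahedral group. I expect the main obstacle to be the first reduction rather than the combinatorics: a priori $\tilde g$ is an arbitrary local unitary, with no reason to normalise the Pauli group, and it is only the passage to the projector identity and its expansion over identity-free Pauli strings that rigidifies it. The signs concealed in $\sigma(v)$ and in $\hat v_s=\pm e_w$ are harmless, the conclusion being about signed permutations; and replacing $\sigma$ by another section conjugates everything by a Pauli operator, octahedral at each qubit, so the statement indeed does not depend on $\sigma$.
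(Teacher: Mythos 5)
Your proof is correct, and it shares the paper's skeleton: both arguments reduce membership in $A(L,\sigma)$ to commutation with the projector $\frac14\bigl(1+\sigma(a)+\sigma(b)+\sigma(c)\bigr)$, discard the identity term, and exploit the fact that all three nonzero vectors of $L$ have full support and give an orthonormal frame of $\Herm^0{\cal H}\simeq\bb R^3$ at each qubit. The decisive step, however, is executed differently. The paper takes the partial dual with respect to $S=P\sm\{p\}$ to obtain a $g$-equivariant operator $\Herm^0{\cal H}\to{(\Herm^0{\cal H})}^{\otimes S}$ carrying $Z,X,Y$ to the decomposable tensors $Z^{\otimes S}$, $X^{\otimes S}$, $i^{|P|}Y^{\otimes S}$, and then invokes the geometric fact that for $|S|\ge2$ no other linear combination of these three is decomposable; equivariance then forces $g_p$ to preserve $\bb R\!\cdot\!Z\cup\bb R\!\cdot\!X\cup\bb R\!\cdot\!Y$. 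You instead pair the projector identity against all frame tensors $\bigotimes_s\hat t_s$ and analyse the resulting relations $\sum_v\prod_s(N_s)_{t_s,v}=1$ or $0$ combinatorially, concluding outright that every $N_s$ is a signed permutation matrix. Your route is more elementary and self-contained --- in effect it supplies a proof of the decomposability rigidity that the paper asserts in one sentence without justification --- and it delivers the octahedral conclusion at every qubit simultaneously (which of course also follows from the paper's version by varying $p$). The price is a longer coordinate computation where the paper isolates the conceptual point in a single appeal to the geometry of decomposable tensors. Your remarks on the signs hidden in $\sigma(v)$ and on independence of the choice of section are also sound.
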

\begin{proof}
Pick any base $(v,w)$ of $L$ as a vector space\ts; the assumption on $L$ implies that both $v$ and $w$ have full support $P$. By a local symplectic automorphism and the standard choice of the section $\sigma$ we can normalise
\begin{displaymath}
  v=\sum_{s\in P}z_s\hbox{\q and\q}w=\sum_{s\in P}x_s,
\end{displaymath}
as well as
\begin{displaymath}
  \tilde L=\Bigl\langle\bigotimes_{s\in P}Z_s,
                       \bigotimes_{s\in P}X_s\Bigr\rangle.
\end{displaymath}
An element $g=\bigotimes_{s\in P}g_s\in LU(P)$ belongs to $A(L,\sigma)$ if and only if it commutes with the orthogonal projector to $C(L,\sigma)$
\begin{displaymath}
  \pi(L,\sigma)=\textstyle{1\over4}
    \left(1^{\otimes P}+Z^{\otimes P}+X^{\otimes P}
                +i^{|P|}Y^{\otimes P}\right)\in\End{\cal H}^{\otimes P},
\end{displaymath}
or equivalently with the traceless Hermitian endomorphism
\begin{displaymath}
  Z^{\otimes P}+X^{\otimes P}
        +i^{|P|}Y^{\otimes P}\in{(\Herm^0{\cal H})}^{\otimes P}
\end{displaymath}
--- note that $\omega(v,w)=0$ implies that $|P|$ must be even. Given $p\in P$ we put $S=P\sm\{p\}$ we take the partial dual of this endomorphism and obtain a $g$-equivariant linear operator
\begin{displaymath}
  \Herm^0{\cal H}\too{(\Herm^0{\cal H})}^{\otimes S}.
\end{displaymath}
This operator takes $Z$, $X$, and $Y$ to $Z^{\otimes S}$, $X^{\otimes S}$, and $i^{|P|}Y^{\otimes S}$ respectively. In view of $|S|\ge2$ each of these three tensors is decomposable, but this is true of no other linear combination of them. Therefore conjugation by $g$ must preserve the subset $\bb R\!\cdot\!Z\cup\bb R\!\cdot\!X\cup\bb R\!\cdot\!Y\subset\Herm^0{\cal H}$, and this means that the projection $g_p\in SO(3)$ belongs to the octahedral group.
\end{proof}
\vspace*{12pt}
Another kind of restriction may arise from one of the three infinite dihedral groups $D_a\subset SO(3)$ generated by all rotations around the $a$ axis and one rotation by 180 degrees around an orthogonal one, for $a\in\{x,y,z\}$\ts: see Fig.\ts\ref{dihedral}. A comprehensive listing of restrictions on the automorphism groups at the various qubits is provided by
\vspace*{12pt}
\begin{thm}\label{pointwise_groups}
Let $L\subset V^P$ be isotropic subspace that does not split off any zero factor nor a Bell Lagrangian. Then for each $p\in P$ at least one of the following statements is true.
\begin{enumerate}
  \item $A(L)_p$ is contained in the octahedral group, or
  \item $A(L)_p$ is contained in one of the dihedral groups $D_x$, $D_y$,
  $D_z$.
\end{enumerate}
\end{thm}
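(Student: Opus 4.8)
The plan is to fix a qubit $p$ and to distinguish cases according to the rank of the Cartesian projection $\pr_p\from V^P\too V^{\{p\}}=V$, after first reducing to an indecomposable $L$. Since $A(L,\sigma)=\prod_SA(L_S,\sigma|L_S)$ and the projection to the single qubit $p$ only sees the indecomposable block containing $p$, I may assume $L$ itself indecomposable; the standing hypothesis then guarantees that this block is neither a zero factor nor a Bell Lagrangian. Observe also that if some $v\in L$ had support $\{p\}$, then isotropy would give $\omega(v_p,w_p)=0$ and hence $w_p\in\langle v_p\rangle$ for every $w\in L$, because $a^\perp=\langle a\rangle$ in the symplectic plane $V$. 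Thus $\pr_p(L)$ is either a line or all of $V$, and a weight-one element at $p$ can only occur in the line case.

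First the easy alternative: suppose $\pr_p(L)=\langle a\rangle$ is a line and let $A$ be the associated Pauli operator. Every element of $\tilde L$ then acts at $p$ by $I$ or $A$, so with $S=P\sm\{p\}$ the projector splits as $\pi(L,\sigma)=I\otimes B_0+A\otimes B_1$ with $B_0,B_1\in\End{\cal H}^{\otimes S}$ and $B_1\neq0$. A family $g$ lies in $A(L,\sigma)$ precisely when $\tilde g=\tilde g_p\otimes\tilde g_S$ commutes with $\pi(L,\sigma)$; comparing the components that are traceless in the $p$-th tensor factor yields $(g_p\!\cdot\!A)\otimes(\tilde g_SB_1\tilde g_S\inv)=A\otimes B_1$. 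Expanding $g_p\!\cdot\!A$ in the basis $X,Y,Z$ and using $B_1\neq0$ forces $g_p\!\cdot\!A=\pm A$, so $g_p$ preserves the axis $a$. Hence $A(L)_p\subset D_a$, which is the dihedral alternative.

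The substantial case is $\pr_p(L)=V$, where I aim for the octahedral alternative and argue by induction on $|P|$. By Lemma~\ref{projection_to_subsets} one has $A(L)_p\subset A(L_S)_p$ for every $S\ni p$, so it suffices to exhibit a \emph{proper} subset $S\ni p$ on which $L_S$ is admissible at $p$, meaning its $p$-block is nonzero and not a Bell Lagrangian: the inductive hypothesis then bounds $A(L_S)_p$, and therefore $A(L)_p$, by an octahedral or a dihedral group. Concretely I would pick $v\in L$ through $p$ of minimal support $T=\supp v$; restricting to $T$ keeps $\pr_p$ surjective onto a nonzero space, and as long as $T\subsetneq P$ the induction applies directly (indeed if $\dim\pr_p(L_T)=1$ we fall back into the line case). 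The genuinely irreducible situation is the one in which no proper restriction is admissible; I expect this to pin $L$ down as an isotropic \emph{plane} whose projection onto each $V^{\{s\}}$ is surjective --- exactly the hypothesis of Rains' Lemma (Proposition~\ref{rains}) --- which then gives $A(L)_p\subset$ the octahedral group and closes the induction.

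The hard part will be the combinatorial bookkeeping of this last step: showing that whenever $L$ is \emph{not} such a Rains plane a good proper restriction does exist. Two configurations demand care. If $\dim L=2$ but $\pr_r(L)$ is a line for some $r\neq p$, then $L_{P\sm\{r\}}$ is a single line still meeting $p$, which returns the dihedral alternative. If instead some pair $\{p,q\}$ carries a Bell Lagrangian $L_{\{p,q\}}$, the restriction to $\{p,q\}$ is useless and indecomposability must be invoked to connect $\{p,q\}$ to the remaining qubits; here I would delete the Bell partner $q$ rather than $p$ in order to restore an admissible restriction. The crux is to rule out, in every branch, the emergence of a spurious Bell or zero block at $p$, so that the excluded Bell Lagrangians are indeed the only obstruction --- and this is precisely where the hypothesis on $L$ is consumed.
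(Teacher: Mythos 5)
Your reduction to an indecomposable $L$ and your treatment of the rank-one case ($\pr_p(L)$ a line) are correct and essentially coincide with the paper's: comparing the traceless-at-$p$ component of the conjugated projector with $B_1\neq0$ (which holds because the Pauli operators $\sigma(v)|_S$ for the $v$ with $v_p=a$ are linearly independent) does force $g_p\!\cdot\!A=\pm A$, hence $A(L)_p\subset D_a$. The problem is the rank-two case, which is the substance of the theorem and which you leave open: the statements ``I expect this to pin $L$ down as an isotropic plane \dots'' and ``the hard part will be the combinatorial bookkeeping'' defer exactly the step that has to be proved, namely that the terminal case of your induction is a Rains plane. Without that, the argument does not close.

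The paper closes it without any induction, by exploiting minimality of support more aggressively than you do. Choose $v\in L$ with $v_p\neq0$ of minimal support $T$ among supports containing $p$, and restrict to $T$ via Lemma~\ref{projection_to_subsets}; minimality then says that \emph{every} $u\in L_T$ with $u_p\neq0$ has full support $T$. If $\pr_p(L_T)=V$, pick $w\in L_T$ with $w_p\notin\langle v_p\rangle$: applying minimality to $w$ and to $v+w$ shows $v_s\neq w_s\neq 0$ for all $s\in T$, so $\{v_s,w_s\}$ spans $V$ at every $s$; and for any $u\in L_T$ one reduces modulo $\langle v,w\rangle$ to $u'$ with $u'_{s_0}=0$ at one fixed $s_0\neq p$, whereupon minimality applied to $v+u'$, $w+u'$ and $v+w+u'$ forces $u'_s\notin\{v_s,w_s,v_s+w_s\}$, i.e.\ $u'=0$. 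Thus $L_T=\langle v,w\rangle$ is precisely a Rains plane; $|T|$ is even by isotropy and $|T|=2$ would make $L_T$ a Lagrangian of $V^T$, hence (by the inclusion $L\subset L_T^\perp=L_T\times V^{P\setminus T}$) a Bell factor of $L$, which the hypothesis excludes — so $|T|\ge3$ and Proposition~\ref{rains} applies. Note also that your worry about ``spurious Bell blocks'' in restrictions is vacuous for the same reason: a Bell Lagrangian $L_{\{p,q\}}=L\cap V^{\{p,q\}}$ would automatically split off from $L$ itself; but this, too, is something your sketch asserts only as a hope rather than proves.
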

\begin{proof}
Given $p$ there exist vectors $v\in L$ with $v_p\neq0$, and we pick some such $v$ whose support is minimal among all supports containing $p$. By Lemma \ref{projection_to_subsets} we may assume that this support is the full set $P$. A local symplectic automorphism takes $v$ to the normal form $v=\sum_sz_s$, and we distinguish two cases\ts:
\begin{enumerate}
  \item If the Cartesian projection $V^P\too V^{\{p\}}$ sends $L$ onto
  $V^{\{p\}}$ then we pick another vector $w\in L$ such that $v_p\neq w_p$.
  In view of the minimality we must have $v_s\neq w_s$ for all
  $s\in P$, and may normalise to $w=\sum_sx_s$.

  \smallskip
  We claim that $L$ is the plane spanned by $v$ and $w$. Indeed, since
  clearly $|P|>1$ we may pick some $s\in P\sm\{p\}$\ts: then every vector
  $u\in L$ is congruent modulo $\langle v,w\rangle$ to another one, say
  $u'\in L$, with $u_s'=0$, and by minimality we conclude $u'=0$.
  
  Since $L$ does not contain a Bell factor we even have $|P|\ge3$, and
  Rains' Lemma now implies that $A(L)_p$ is contained in the octahedral
  group.
\medskip
  \item In case the projection of $L$ to $V^{\{p\}}$ is one-dimensional we
  put $S=P\sm\{p\}$\ts; for suitable (indeed half the possible) choices of
  $\sigma$ the projector to $C(L,\sigma)$ is
  \begin{displaymath}
    \pi(L,\sigma)=\textstyle{1\over2}\left(1+Z^{\otimes P}\right)\circ
                           \bigl(1\otimes\pi(L_S,\sigma|L_S)\bigr).
  \end{displaymath}
  Thus for every $g={(g_s)}_{s\in P}\in A(L,\sigma)$ the plane
  $\langle1,Z\rangle\subset\Herm{\cal H}$ must be stable under conjugation by
  $g_p$, and we conclude $g_p\in D_z$.
\end{enumerate}
\end{proof}
\vspace*{12pt}
Recall that a group is called solvable if it can be represented as a successive extension of abelian groups. Examples include the octahedral and dihedral groups, but not, for instance, the symmetry group of an icosahedron.
\vspace*{12pt}
\begin{cor}\label{solvable}
Let $L\subset V^P$ be isotropic subspace that does not split off any zero factor nor a Bell Lagrangian. Then $A(L)$ is a solvable compact Lie group.
\end{cor}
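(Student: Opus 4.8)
The plan is to deduce the corollary almost formally from Theorem \ref{pointwise_groups}, using only that solvability is inherited by subgroups and by finite direct products, together with the fact that a closed subgroup of a compact Lie group is again a compact Lie group. The organising observation is that the defining inclusion $A(L)\subset SO(3)^P=\prod_{p\in P}SO(3)$ factors through the product $\prod_{p\in P}A(L)_p$ of the automorphism groups at the individual qubits, simply because by definition the $p$-th coordinate of any $g\in A(L)$ lies in $A(L)_p$.

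First I would check that each of the groups listed in Theorem \ref{pointwise_groups} is solvable. The octahedral group is isomorphic to the symmetric group $S_4$, whose derived series $S_4\triangleright A_4\triangleright V_4\triangleright 1$ has abelian quotients; each infinite dihedral group $D_a$ is isomorphic to $O(2)$ and is solvable because the circle group $SO(2)$ sits inside it as an abelian subgroup of index two. Since the hypothesis on $L$ is exactly that of Theorem \ref{pointwise_groups}, that theorem applies to $L$ itself and shows that every $A(L)_p$ is a subgroup of one of these solvable groups, hence solvable. A finite direct product of solvable groups is solvable, and a subgroup of a solvable group is solvable; applying these two closure properties to $A(L)\subset\prod_{p\in P}A(L)_p$ yields that $A(L)$ is solvable.

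For the topological assertion I would note that $g\in SO(3)^P$ lies in $A(L)$ precisely when a representative $\tilde g$ commutes with the orthogonal projector $\pi(L,\sigma)$ onto $C(L,\sigma)$ --- a unitary operator sending the finite-dimensional space $C(L,\sigma)$ into itself preserves it together with its orthogonal complement, and scalars are harmless here --- so that $A(L)$ is cut out by a closed condition. Thus $A(L)$ is a closed subgroup of the compact Lie group $SO(3)^P$, and is itself a compact Lie group by Cartan's closed-subgroup theorem. The essential work has of course already been done in Theorem \ref{pointwise_groups}; the one conceptual point worth stressing --- and the reason the theorem's dichotomy is exactly what is needed --- is that both admissible target groups are solvable, in contrast with the non-solvable icosahedral subgroup of $SO(3)$, which the theorem precisely excludes.
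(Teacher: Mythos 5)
Your argument is correct and coincides with the paper's own proof, which likewise observes that by Theorem \ref{pointwise_groups} the group $A(L)\subset\prod_{p\in P}A(L)_p$ is a closed subgroup of a Cartesian product of solvable compact Lie groups. You merely spell out the details the paper leaves implicit (solvability of the octahedral and dihedral groups, closedness via the projector condition), all of which are accurate.
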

\begin{proof}
It suffices to observe that by Theorem \ref{pointwise_groups}
\begin{displaymath}
  A(L)\subset\prod_{p\in P}A(L)_p
\end{displaymath}
is a closed subgroup of a Cartesian product of solvable compact Lie groups.\end{proof}
\vspace*{12pt}
Given any $L$ and $\sigma$ with $\dim L=|P|\!-\!d$ let
\begin{displaymath}
  Z(L,\sigma)=\left\{g\in A(L,\sigma)\,\big|\,\tilde g|C(L,\sigma)
                 \hbox{ is scalar for some }\tilde g
                 \hbox{ representing }g\right\}
\end{displaymath}
be the normal subgroup of automorphisms which act by scalars on the code space. Up to scalar factors the quotient group $\overline A(L,\sigma):=A(L,\sigma)/Z(L,\sigma)$ is the group of unitary operators on $C(L,\sigma)$ that can be (exactly) realised \textit{transversally} on ${\cal H}^{\otimes P}$. If $\tilde L^\perp\subset\tilde V^P$ denotes the inverse image of $L^\perp$ under the projection $\tilde V^P\to V^P$
then an embedding of groups
\begin{displaymath}
  L^\perp/L\simeq\tilde L^\perp/(S^1\tilde L)\too\overline A(L,\sigma)
\end{displaymath}
is induced, and the identification of $C(L,\sigma)$ with a space ${\cal H}^{\otimes d}$ of logical qubits can be ---and usually is --- made in such a way that $\tilde L^\perp/\tilde L$ acts as the logical Pauli group. In the present context, though, any identification will do.
\vspace*{12pt}
\begin{cor}\label{no_transv}
Let $L\subset V^P$ be an isotropic subspace that does not split off a zero factor. Then $\overline A(L)$ is a solvable compact Lie group. In particular no universal set of operators even on one logical qubit can be realised transversally.
\end{cor}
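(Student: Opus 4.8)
The plan is to reduce Corollary~\ref{no_transv} to Corollary~\ref{solvable} by handling the two features that distinguish their hypotheses: the corollary now allows Bell factors to split off, and it asserts solvability of the quotient $\overline A(L)$ rather than of $A(L)$ itself. First I would invoke the decomposition $L=\prod_S L_S$ into indecomposable factors discussed before Lemma~\ref{projection_to_subsets}, which yields $A(L,\sigma)=\prod_S A(L_S,\sigma|L_S)$ and, since $Z(L,\sigma)$ respects the tensor factorisation of $C(L,\sigma)$, a corresponding product decomposition $\overline A(L,\sigma)=\prod_S\overline A(L_S,\sigma|L_S)$. Because a finite product of solvable groups is solvable, it suffices to prove the statement for each indecomposable factor.

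Next I would dispose of the indecomposable factors case by case. The hypothesis forbids zero factors (the trivial $|P|=1$, $L=0$ case with $A=SO(3)$), so the only indecomposable pieces not already covered by Corollary~\ref{solvable} are the Bell Lagrangian and the trivial qubit from the two exceptional cases analysed in the enumerated list preceding Lemma~\ref{projection_to_subsets}. For a factor $L_S$ to which Corollary~\ref{solvable} applies, $A(L_S)$ is already solvable and compact, and $\overline A(L_S)$ is a quotient of it, hence solvable and compact as well. It therefore remains only to check the two small exceptional factors directly. For the trivial qubit, $C(L,\sigma)$ is the one-dimensional span of $|0\rangle$, so every automorphism acts by a scalar there; thus $Z(L,\sigma)=A(L,\sigma)$ and $\overline A(L,\sigma)$ is trivial. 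For the Bell Lagrangian, $C(L,\sigma)$ is again one-dimensional (the single Bell state), so once more $\overline A(L,\sigma)$ is trivial. In both cases the quotient is solvable and compact, completing the reduction.

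With solvability and compactness of $\overline A(L)$ established, the final sentence about transversal gates follows by contradiction. If a universal set of gates on a single logical qubit could be realised transversally, those gates would generate a dense subgroup of the one-qubit unitary group, and hence their images in $\overline A(L)$ would generate a subgroup whose closure is all of $SU(2)/\{\pm1\}=SO(3)$, or at any rate a group containing $SO(3)$. But $SO(3)$ is not solvable, contradicting the solvability of the compact group $\overline A(L)$, which no subgroup isomorphic to $SO(3)$ can sit inside. The transversal realisability is exactly the statement that the gate lies in $\overline A(L,\sigma)$, so universality is impossible.

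I expect the main conceptual point --- though not a genuine obstacle --- to be the bookkeeping that the Bell factor, newly permitted here, contributes nothing to $\overline A(L)$ even though its $A(L_S)\simeq SO(3)$ is emphatically \emph{not} solvable; the one-dimensionality of the Bell code space is precisely what collapses $\overline A(L_S)$ to the trivial group and thereby rescues the corollary. The remaining steps are routine: the product decomposition of $\overline A$ mirrors that of $A$, quotients of solvable groups are solvable, and the density argument for the universality obstruction is standard.
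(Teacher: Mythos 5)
Your proof is correct and follows essentially the same route as the paper: dispose of Bell factors by observing that their code space is one‑dimensional so they contribute nothing to $\overline A(L)$, quote Corollary~\ref{solvable} for the remaining factors and pass to the quotient, and derive non‑universality from the fact that a solvable (hence proper) closed subgroup of the simple group $SO(3)$ cannot be dense. The only harmless differences are that you spell out the product decomposition of $\overline A$ explicitly and treat the trivial‑qubit factor as a separate exceptional case, although it is in fact already covered by Corollary~\ref{solvable}.
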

\begin{proof}
For a Bell Lagrangian $L$ we have $Z(L)=A(L)\simeq SO(3)$, so that $\overline A(L)$ is trivial. We thus may assume that $L$ does not contain any such factor. Now given an arbitrary isomorphism $C(L)\simeq{\cal H}^{\otimes d}$ the elements of $A(L)$ that induce a one-qubit operator on a fixed logical qubit form a closed subgroup of $A(L)$. Its image in the copy of $SO(3)$ corresponding to that qubit is still solvable\ts: it must be a proper subgroup since $SO(3)$ is a simple group. As it is a closed subgroup the class of operators that allow but approximate transversal realisations is no larger.
\end{proof}
\vspace*{12pt}
Note that in view of the known classification of subgroups of $SO(3)$ the last conclusion of the corollary effectively restricts the set of transversal one qubit gates to a subgroup of either an octahedral or an infinite dihedral group.
\section{The connected group of automorphisms}
\label{connected_group}
\noindent
In this section we determine the connected component $A_c(L)$ of the identity operator in the automorphism group $A(L)$. As this group has already been determined for zero factors as well as trivial and Bell Lagrangians we assume that the isotropic subspace $L$ does not contain any of these. Given $L\subset V^P$ we call two qubits $p,q\in P$ \textit{contiguous} if
\begin{displaymath}
  p=q\hbox{, or }L\hbox{ contains a vector }v\hbox{ with }\supp v=\{p,q\}.
\end{displaymath}
Contiguity is an equivalence relation since for pair-wise distinct qubits $p,q,r\in P$ and vectors $v,w\in L$ with $\supp v=\{p,q\},\;\supp w=\{q,r\}$ the isotropy of $L$ enforces $v_q=w_q$ and thereby $\supp(v\!+\!w)=\{p,r\}$. We denote by $\cal S$ the set of contiguity classes.

Recall that $(z,x)$ denotes the standard base of the vector space $V=\bb F_2\oplus\bb F_2$. We append the subscript $p$ when referring to these vectors in the $p$-th copy $V^{\{p\}}$ of $V$. 
\vspace*{12pt}
\begin{la}\label{normalise}
Let $L\subset V^P$ be an isotropic subspace free of zero, trivial, and Bell factors. Applying a suitable local symplectic automorphism of $V^P$ we can then achieve that for each $S\in{\cal S}$ either
\begin{displaymath}
  L_S=\langle z_p\!+\!z_q\,|\,p\neq q\rangle
\end{displaymath}
or
\begin{displaymath}
  L_S=\langle z_p\!+\!z_q\,|\,p\neq q\rangle
     +\Bigl\langle\sum_{s\in S}x_s\Bigr\rangle.
\end{displaymath}
\end{la}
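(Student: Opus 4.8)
The plan is to treat the contiguity classes one at a time. Since distinct classes $S,S'\in{\cal S}$ consist of disjoint qubits, the local symplectic automorphisms used to normalise $L_S$ act as the identity on the qubits of $S'$, so the normalisations for different classes may be carried out independently and then assembled into a single global local symplectic automorphism; moreover $L_S=L\cap V^S$ only sees the vectors of $L$ that are supported inside $S$, so the support-$\ge3$ vectors crossing several classes are irrelevant to the statement. I therefore fix one class $S$ with $k:=|S|$ and normalise $L_S\subset V^S$.

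First I would produce the ``difference'' subspace. For any two distinct $p,q\in S$ contiguity supplies a vector of $L$ with support $\{p,q\}$; fixing $s_1\in S$ and choosing for each $j$ a vector $v^{(1j)}\in L$ with $\supp v^{(1j)}=\{s_1,s_j\}$, the fact that $v^{(1i)}$ and $v^{(1j)}$ overlap only at $s_1$ turns isotropy into $\omega(v^{(1i)}_{s_1},v^{(1j)}_{s_1})=0$; as two nonzero vectors of $V$ are $\omega$-orthogonal only if equal, all the $v^{(1j)}_{s_1}$ coincide. A local symplectic automorphism at $s_1$ turns this common value into $z$, and one at each $s_j$ turns $v^{(1j)}_{s_j}$ into $z$, so that $v^{(1j)}=z_{s_1}\!+\!z_{s_j}$; since $z_{s_i}\!+\!z_{s_j}=(z_{s_1}\!+\!z_{s_i})+(z_{s_1}\!+\!z_{s_j})$, the whole $(k\!-\!1)$-dimensional subspace $D:=\langle z_p\!+\!z_q\mid p\neq q\rangle$ then lies in $L_S$.

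Next I would pin down the dimension. As $L_S$ is isotropic in the $2k$-dimensional symplectic space $V^S$ we have $k\!-\!1\le\dim L_S\le k$. If $\dim L_S=k\!-\!1$ then $L_S=D$, the first normal form. Otherwise $L_S$ is Lagrangian and $L_S=D+\langle u\rangle$ for any $u\in L_S\sm D$. Pairing $u$ against the generators $z_p\!+\!z_q$ of $D$ shows that the $x$-coordinate of $u$ is constant along $S$, hence either identically $0$ or identically $1$. The constant~$0$ case is precisely where the hypothesis is used: there $u$ is a pure $z$-vector of odd weight, so $u$ is congruent modulo $D$ to a single $z_{s_1}$, whence $z_{s_1}\in L_S\subset L$; isotropy then forces the $s_1$-component of every vector of $L$ to lie in $\langle z\rangle$, so $L$ splits off the trivial factor $\langle z_{s_1}\rangle$, contrary to assumption. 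Thus the $x$-coordinate of $u$ is identically $1$, i.e. $u=\sum_{s\in S}x_s+\sum_{s\in S}c_sz_s$, and shearing at each $s$ by $x_s\mapsto x_s+c_sz_s$ --- a local symplectic automorphism fixing every $z_s$ and hence $D$ --- carries $u$ to $\sum_{s\in S}x_s$, yielding the second normal form.

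Finally the degenerate class $|S|=1$ needs a word on its own: here $D=0$ and $L_S\subset V$ is isotropic, so $\dim L_S\le1$, and a one-dimensional $L_S$ would by the same splitting argument exhibit $\{p\}$ as a (trivial) factor, which is excluded; hence $L_S=0$, again the first form. I expect the real obstacle to be the Lagrangian case --- recognising that $\dim L_S$ can only be $k\!-\!1$ or $k$, and above all invoking the absence of trivial factors to rule out the pure-$z$ extension; once that is done, the normalisation of $u$ to $\sum_{s\in S}x_s$ by shears is routine.
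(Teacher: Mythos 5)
Your proof is correct and follows essentially the same route as the paper: first pin down the subspace $\langle z_p\!+\!z_q\,|\,p\neq q\rangle$ via contiguity and isotropy, then use the absence of trivial factors to show any extra vector has constant $x$-coordinate equal to $1$ and normalise it by shears fixing the $z_s$. Your version merely makes explicit a few points the paper leaves implicit (the dimension bound $\dim L_S\le|S|$, the $|S|=1$ case, and the constancy of the $x$-coordinate), which is fine.
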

\begin{proof}
For $|S|=1$ the conclusion holds trivially, and we thus assume $|S|\ge2$. Fix any $q\in S$, and for each $p\in S\sm\{q\}$ pick a vector $v(p)\in L$ with $\supp v(p)=\{p,q\}$. Since $L$ is isotropic the component $v(p)_q\in V$ is independent of $p$, and by a local symplectic transformation we achieve that $v(p)=z_p\!+\!z_q$ holds for all $p\neq q$. Thus $L$ contains the subspace $\langle z_p\!+\!z_q\,|\,p\neq q\rangle$ of dimension $|S|\!-\!1$.

In case this is a proper subspace of $L_S$ we choose an arbitrary $w=\sum_{s\in S}w_s$ in the difference $L\sm\langle z_p\!+\!z_q\,|\,p\neq q\rangle$. Since $L$ has no trivial factor at least one $w_s$ differs from $z_s$, and as $L$ is isotropic this is even true for every $s\in S$. A further local symplectic transformation fixing the $z_s$ now moves $w_s$ to $x_s$ for each $s\in S$.
\end{proof}
\vspace*{12pt}
Note that in the second case of Lemma \ref{normalise} the isotropic space $L_S$ is a Lagrangian defining a Greenberger-Horne-Zeilinger (GHZ) state and necessarily a direct factor of $L$.

Let $L$ be given as before. Following \cite{poulin} we call an element $p\in P$ a \textit{protected} qubit if the Cartesian projection
\begin{displaymath}
  L\hookrightarrow V^P\too V^{\{p\}}
\end{displaymath}
is surjective, while otherwise --- when $L$ projects to a line --- we call $p$ a \textit{gauge} qubit. By Lemma \ref{normalise} these properties only depend on the contiguity class of $p$, so that we may speak of classes $S\in{\cal S}$ that are protected respectively of gauge type.

We can now state our result. Recall that $C_a\subset SO(3)$ is the circle group of rotations around the $a$ axis.
\vspace*{12pt}
\begin{thm}\label{connected}
Let $L\subset V^P$ be an isotropic subspace free of zero, trivial, and Bell factors. Then after a suitable local symplectic transformation and for a suitable choice of the section $\sigma\from L\to\tilde L$ the connected automorphism group becomes
\begin{displaymath}
  A_c(L,\sigma)=\Bigl\{{(g_p)}_{p\in P}\in {(C_z)}^P\;\Big|\;
                  \prod_{s\in S}g_s=1
                  \hbox{ for every protected class }S\in{\cal S}\Bigr\}.
\end{displaymath}
Thus the dimension of this torus is $|P|$ minus the number of protected contiguity classes.
\end{thm}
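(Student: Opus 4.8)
The plan is to begin from the normal form of Lemma~\ref{normalise}, supplemented by one further local symplectic step: whenever a qubit $p$ is of gauge type I rotate its line $\pr_p(L)\subset V^{\{p\}}$ onto $\langle z\rangle$. Together with the standard choice of $\sigma$ this puts the stabilizer group into the shape generated by the operators $Z_pZ_q$ (for contiguous $p,q$), the GHZ operators $\bigotimes_{s\in S}X_s$ (for the Lagrangian classes of Lemma~\ref{normalise}), and Pauli products coming from the remaining, possibly cross--class, basis vectors of $L$. After this preparation no $x$ occurs at a gauge qubit, so every $x$ appearing in a vector of $L$ sits at a protected qubit. Since $A(L,\sigma)$ is a compact solvable Lie group by Corollary~\ref{solvable}, its identity component $A_c(L,\sigma)$ is a torus, and it is enough to pin down the corresponding Lie subalgebra of $\mathfrak{so}(3)^P$.

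First I would show $A_c(L,\sigma)\subset{(C_z)}^P$. If $p$ lies in a contiguity class of size $\ge2$, I choose a contiguous partner $q$; by Lemma~\ref{projection_to_subsets} the projection of $A(L,\sigma)$ at $\{p,q\}$ lands in $A(L_{\{p,q\}})$, and since $L$ carries no Bell factor isotropy forces $L_{\{p,q\}}=\langle z_p\!+\!z_q\rangle$. The automorphism group of this $\langle Z_pZ_q\rangle$--code at $p$ is exactly $D_z$, with identity component $C_z$. If $p$ is a singleton, Theorem~\ref{pointwise_groups} leaves only the octahedral group (trivial identity component) and $D_z$; either way the projection of $A_c(L,\sigma)$ to $SO(3)^{\{p\}}$ lies in $C_z$. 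Hence each element of $A_c(L,\sigma)$ is a family of $z$--rotations by angles ${(\theta_p)}_{p\in P}$, whose canonical lift acts on the computational basis by $|b\rangle\mapsto e^{i\langle\theta,b\rangle}|b\rangle$.

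It remains to single out the admissible angles. Let $H\subset\bb F_2^P$ be the image of $L$ under projection to the $x$--coordinates and ${\cal C}=L_Z^\perp$ the classical code defined by the subspace $L_Z$ of those elements of $L$ without $x$--part; the $H$--orbits in ${\cal C}$ are in bijection with a basis of $C(L,\sigma)$. The diagonal lift preserves the code precisely when $\langle\theta,\cdot\rangle$ is constant on each orbit, which amounts to
\begin{displaymath}
  \sum_{p\in\supp h}{(-1)}^{c_p}\,\theta_p=0
  \q\hbox{for all }h\in H,\;c\in{\cal C}.
\end{displaymath}
Two facts organise this system. By the previous paragraph $\supp h$ consists of protected qubits only; and the generators $z_p\!+\!z_q$ force every $c\in{\cal C}$ to be constant on each contiguity class, whereas distinct classes impose no relation on the signs ${(-1)}^{c_p}$. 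Thus for a protected Lagrangian class $S$ the operator $\bigotimes_{s\in S}X_s$ contributes the single relation $\sum_{s\in S}\theta_s=0$, while an $x$--support spreading over several protected singletons yields, through the now independent signs, the individual relations $\theta_p=0$. In every case a protected class $S$ accounts for exactly one relation, namely $\prod_{s\in S}g_s=1$.

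Finally I would assemble the result. The relations cut out the kernel $T$ of the homomorphism ${(C_z)}^P\to\prod_{S}C_z$, $\theta\mapsto{\bigl(\sum_{s\in S}\theta_s\bigr)}_S$, with $S$ running over the protected classes; this map is surjective on cocharacter lattices, so $T$ is a connected subtorus, and by construction $T\subset A(L,\sigma)$. With the inclusion $A_c(L,\sigma)\subset A(L,\sigma)\cap{(C_z)}^P=T$ obtained in the two preceding paragraphs this gives $A_c(L,\sigma)=T$, of dimension $|P|$ minus the number of protected classes. I expect the real work to lie in the middle step: turning the orbit relations into the clean count of one relation per protected class, and in particular checking that the cross--class operators attached to protected singletons supply exactly enough independent sign patterns to force each of the corresponding angles to vanish --- neither fewer nor more.
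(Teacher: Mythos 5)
Your architecture runs parallel to the paper's --- normalise, show $A_c(L,\sigma)\subset{(C_z)}^P$, then compute the constraints on the angles --- but the step you yourself flag as ``the real work'' is exactly where the argument breaks, and the shortcut you propose for it is false. You claim that, beyond constancy on contiguity classes, ``distinct classes impose no relation on the signs ${(-1)}^{c_p}$''. That would require $L_Z$ to be spanned by the contiguity generators $z_p\!+\!z_q$, which fails in general: isotropy places no constraint on the $z$--components of a $z$--only vector against $z_p\!+\!z_q$, so $L$ may contain $z$--only vectors supported across many classes, and then ${\cal C}=L_Z^\perp$ is a proper subcode of the class--constant vectors. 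Already for $L=\langle\sum_sz_s,\sum_sx_s\rangle$ on four qubits (all classes singletons and protected, $L_Z$ spanned by the all--ones vector, ${\cal C}$ the even--weight code) the sign patterns are not independent, and your inference ``independent signs $\Rightarrow$ individual relations $\theta_p=0$'' does not apply; the conclusion happens to survive in that example, but you have shown neither that the restricted sign patterns always suffice to isolate one relation per protected class, nor that they never impose relations finer than that (which would make the torus too small). A second, smaller gap: for a protected singleton Theorem~\ref{pointwise_groups} only yields containment in some $D_a$ with $a\in\{x,y,z\}$, and only in a frame produced inside its own proof, so an additional local symplectic normalisation is needed before you may write the identity component as $C_z$; this normalisation then has to be carried through your computational--basis analysis.

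The paper avoids the sign--pattern combinatorics entirely. For the lower bound it observes (Lemma~\ref{is_large}) that the per--class torus acts trivially on $C(L,\sigma)$, because $Z_pZ_q\in\tilde L$ kills every coefficient that is not constant on the class. For the upper bound (Lemma~\ref{is_small}) it reduces, modulo that torus, to one--parameter groups supported on a transversal $T$ meeting each protected class exactly once, and expands the infinitesimal commutator $\bigl[\sum_p\gamma_pZ_p,\pi(L,\sigma)\bigr]$: the resulting Pauli monomials $Z_p\,\sigma(v)$ are pairwise distinct because $z_p\!+\!z_q\notin L$ for distinct $p,q\in T$, hence linearly independent, and no inner sum is empty because each $p\in T$ is protected, forcing every $\gamma_p$ to vanish. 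If you wish to keep your explicit linear system you must replace the independence claim by an argument of this kind; as written the proof is incomplete at its decisive point.
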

\begin{proof}
We proceed in four steps. We first describe the necessary symplectic transformations, then we show in Lemma \ref{z-axis} that all automorphisms are composed of rotations about the various $z$ axes. Lemma \ref{is_large} then shows that $A_c(L,\sigma)$ is at least as large as stated, and finally in Lemma \ref{is_small} we prove that it is no larger.

We begin by normalising $L$ as in Lemma \ref{normalise}. We are then still free to apply an arbitrary symplectic automorphism to each factor $V^S$ with $S\in{\cal S}$ and $|S|=1$. If such an $S$ is of gauge type we thus achieve that $L$ projects onto the line $\langle z\rangle\subset V^S$. On the other hand if $S=\{p\}$ is protected we make use of the fact that the connected part of $A(L)_p$ is contained in one of the circle groups $C_x$, $C_y$, $C_z$, and apply a local symplectic transformation to obtain that $A(L)_p\subset C_z$.

Another adjustment is made to those factors $V^S$ such that $S\in{\cal S}$ is a protected class with $|S|>1$ and $L_S=\langle z_p\!+\!z_q\,|\,p\neq q\rangle$\ts: the projection of $L$ into $V^S$ must contain a vector $w$ with $w_s\neq z_s$ for at least one and therefore every $s\in S$, and the local symplectic automorphism fixing $z_s$ and sending $w_s$ to $x_s$ transforms $w$ into $\sum_{s\in S}x_s$ without disturbing $L_S$.

We also normalise the choice of the section $\sigma\from L\to\tilde L$, requiring that $\tilde L$ contains the operators $Z_pZ_q\in\tilde V^P$ whenever the qubits $p$ and $q$ are contiguous.
\vspace*{12pt}
\begin{la}\label{z-axis}
If $L$ and $\sigma$ are normalised in this way then for every party $p\in P$ the connected part of $A(L,\sigma)_p$ is contained in $C_z$.
\end{la}
\begin{proof}
There is nothing to show if $p$ is protected and forms a class by itself. In case $p$ is of gauge type and $\{p\}\in{\cal S}$ we write
\begin{displaymath}
  V^P=V^{\{p\}}\times V^T\hbox{\q with }T:=P\sm\{p\}
\end{displaymath}
and pick a $v\in L$ with $v_p=z$. The Pauli operator $\sigma(v)\in\Herm{\cal H}\otimes\Herm{\cal H}^{\otimes T}$ has the form $Z\otimes\tilde v$, and the projector to the code space $C(L,\sigma)$ becomes
\begin{displaymath}
  \pi(C,\sigma)=\textstyle{1\over2}\bigl(1+Z\otimes\tilde v\bigr)
                    \circ\bigl(1\otimes\pi(L_T,\sigma|L_T)\bigr).
\end{displaymath}
Rewriting this as a linear operator $\Herm{\cal H}\to\Herm{\cal H}^{\otimes T}$ we conclude that $A(C,\sigma)_p$ is contained in the dihedral group $D_z$.

Finally we consider the case that $p$ belongs to a class $S\in{\cal S}$ with $|S|>1$. Since $L_S$ must be one of the two possibilities stated in Lemma \ref{normalise} and since by Lemma \ref{projection_to_subsets} we may assume $S=P$, there are but two explicit cases for $L$ to verify. The standard choice of the section $\sigma\from L\to\tilde L$ leads to the code space 
\begin{displaymath}
  \bigl\langle|00\,\cdots\,0\rangle,|11\,\cdots\,1\rangle\bigr\rangle
    \subset{\cal H}^{\otimes P}
\end{displaymath}
respectively the GHZ state
\begin{displaymath}
  |00\,\cdots\,0\rangle+|11\,\cdots\,1\rangle\in{\cal H}^{\otimes P}
  \hbox{\q with }|P|\ge3.
\end{displaymath}
In each case even the full automorphism group is easily read off\ts: it comprises the simultaneous rotation by 180 degrees around each $x$ axis as well as free rotations $g_s\in C_z$ in each factor $s\in S$, restricted by the condition $\prod_{s\in P}g_s=1$ in the GHZ case.
\end{proof}
%
\vspace*{12pt}
\begin{la}\label{is_large}
Let $L$ and $\sigma$ be normalised. Then for every $S\in{\cal S}$ the torus \begin{displaymath}
  \Bigl\{{(g_p)}_{p\in S}\in {(C_z)}^S\;\Big|\;\prod_{s\in S}g_s=1
    \hbox{ if }S\hbox{ is protected }\Bigr\}
\end{displaymath}
is contained in $A(L,\sigma)$.
\end{la}
\begin{proof}
For all $p,q\in S$ the stabilizer group $\tilde L=\sigma(L)$ contains the Pauli product $Z_pZ_q$. Therefore if a state
\begin{displaymath}
  |\psi\rangle=\sum_{u\in\bb F_2^P}\lambda_u\cdot|u\rangle\in C(L,\sigma)
\end{displaymath}
is written out in the computational basis of ${\cal H}^{\otimes P}$, the coefficient $\lambda_u$ must vanish unless all $u_s\in\bb F_2$ with $s\in S$ are equal to each other. Thus every product ${(g_p)}_{p\in S}\in {(C_z)}^S$ with $\prod_{s\in S}g_s=1$ leaves $C(L,\sigma)$ point-wise fixed and in particular belongs to $A(L,\sigma)$.

Nothing more is claimed if the contiguity class $S$ is protected. In case it has gauge type let us put $T:=P\sm S$ and split $V^P=V^S\times V^T$. Since $L$ is normalised the projection $V^P\to V^S$ sends $L$ into $\langle z_s\,|\,s\in S\rangle$, and we find a vector $v\in L$ such that
\begin{displaymath}
  L=L_S+\langle v\rangle+L_T
\end{displaymath}
and $v_s\in\langle z_s\rangle$ for all $s\in S$. We write $\sigma(v)=\tilde v_S+\tilde v_T\in\tilde V^S\oplus\tilde V^T$ and note that $\tilde v_S$ is a product of Pauli $Z$ operators\ts; thus every ${(g_p)}_{p\in S}\in {(C_z)}^S$ indeed commutes with the projector
\begin{displaymath}
  \pi(L,\sigma)
    =\textstyle{1\over2}\bigl(1+\tilde v_S\otimes\tilde v_T\bigr)
     \circ\bigl(\pi(L_S,\sigma)\otimes\pi(L_T,\sigma)\bigr).
\end{displaymath}
\end{proof}
\vspace*{12pt}
By the \textit{support} of an element $g={(g_p)}_{p\in P}\in LU(P)$ we mean, of course, the set of $p\in P$ with $g_p\neq1$. For a given normalised isotropic $L\subset V^P$ we now fix a subset $T\subset P$ that contains exactly one party from each protected class $S\in{\cal S}$, and does not meet any contiguity class of gauge type.
\vspace*{12pt}
\begin{la}\label{is_small}
Every one parameter group $S^1\to A(L,\sigma)$ of automorphisms with support in $T$ is trivial.
\end{la}
\begin{proof}
For fixed $p\in T$ we consider the infinitesimal rotation $Z_p$ in the factor $p$\ts: on the projector $\pi(L,\sigma)$ it acts  by
\begin{displaymath}
  [Z_p,\pi(L,\sigma)]
    =\left[Z_p,\textstyle{1\over|L|}\sum_{v\in L}\sigma(v)\right] 
    =\textstyle{2\over|L|}\!\!\!\!
      \disp\sum_{v\in L\atop v_p\in\{x,z+x\}}\!\!\!\!\!\!
      Z_p\,\sigma(v).
\end{displaymath}
For a general infinitesimal one parameter group of $A(L,\sigma)$ with support in $T$, say $\sum_{p\in T}\gamma_pZ_p$, we thus obtain
\begin{displaymath}
  0=\Bigl[\sum_{p\in T}\gamma_pZ_p,\pi(L,\sigma)\Bigr]
   =\textstyle{2\over|L|}\;
     \disp\sum_{p\in T}\;
     \gamma_p\cdot\!\!\!\!\!\!
     \disp\sum_{v\in L\atop v_p\in\{x,z+x\}}\!\!\!\!\!\!
     Z_p\,\sigma(v).
\end{displaymath}
On the right hand side each term of the inner sum is a tensor product of Pauli operators in ${(\Herm{\cal H})}^{\otimes P}$. In view of $z_p+z_q\notin L$ for every two distinct $p,q\in T$ these operators are pair-wise distinct even up to scalar factors, and therefore form a linearly independent system.

On the other hand each $p\in T$ is protected, so that the projection of $L$ to $V^{\{p\}}$ is surjective. In particular none of the inner sums is empty\ts: this implies $\gamma_p=0$ for all $p\in T$.
\end{proof}
\vspace*{12pt}
We now readily complete the proof of Theorem \ref{pointwise_groups}\ts: using Lemma \ref{is_large} every one parameter group of $A(L,\sigma)$ may be reduced to one with support in $T$, which by Lemma \ref{is_small} must be trivial.
\end{proof}
\vspace*{12pt}
As noted in the proof of Lemma \ref{is_large}, for each class $S\in{\cal S}$ the automorphisms ${(g_p)}_{p\in S}\in {(C_z)}^S$ with $\prod_{s\in S}g_s=1$ act trivially on the code space $C(L,\sigma)$. We thus conclude\ts:
\vspace*{12pt}
\begin{cor}\label{finite}
Let $L\subset V^P$ be an isotropic subspace without zero factors. If every nontrivial qubit of $P$ is protected then the group $\overline A(L)=A(L)/Z(L)$ is finite.
\end{cor}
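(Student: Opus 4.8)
The plan is to reduce the assertion to the statement that the identity component of $\overline A(L)$ is trivial. Since $A(L)$ is a closed subgroup of the compact group $SO(3)^P$ and $Z(L)\subset A(L)$ is closed and normal, the quotient $\overline A(L)=A(L)/Z(L)$ is a compact Lie group; such a group is finite precisely when its identity component reduces to the unit, that is, when the connected group $A_c(L)$ is already contained in $Z(L)$. So everything comes down to showing that every connected automorphism acts on the code space merely by a scalar.

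I would verify this factor by factor. Decomposing $L=\prod_SL_S$ into indecomposable pieces gives $A(L)=\prod_SA(L_S)$ and $C(L)=\bigotimes_SC(L_S)$, and since a tensor product of operators is scalar exactly when each tensor factor is scalar, it suffices to check $A_c(L_S)\subset Z(L_S)$ for every indecomposable $L_S$. The two special types are immediate: for a trivial factor $A(L_S)=C_z$ acts on the single state $|0\rangle$ by a phase, and for a Bell factor the proof of Corollary \ref{no_transv} already records $Z(L_S)=A(L_S)\simeq SO(3)$; in both cases $A_c(L_S)\subset Z(L_S)$.

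For the remaining factors I would invoke Theorem \ref{connected}. Such a factor carries no trivial qubit, so all of its qubits are nontrivial and hence, by hypothesis, protected; moreover every contiguity class is contained in a single indecomposable factor (a vector $z_p\!+\!z_q$ has support inside one factor), so every contiguity class contained in such a factor is protected. Theorem \ref{connected} then describes the connected group entirely through the constraints $\prod_{s\in S}g_s=1$, one for each class $S$. The remark preceding this corollary, extracted from the proof of Lemma \ref{is_large}, says precisely that for each class $S$ the elements of $(C_z)^S$ satisfying $\prod_{s\in S}g_s=1$ fix $C(L)$ pointwise. Writing an element of $A_c(L_S)$ as the product of its restrictions to the individual classes, each restriction fixes the code space, and hence so does their product. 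Thus $A_c(L_S)$ acts trivially, a fortiori by scalars, giving $A_c(L_S)\subset Z(L_S)$.

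Combining the factors yields $A_c(L)\subset Z(L)$, so $\overline A(L)$ is a compact Lie group with trivial identity component and therefore finite. I expect the main obstacle to be the bookkeeping at the interface of the two decompositions in play: translating the hypothesis, which is stated qubit by qubit, into the statement that every contiguity class of the non-special part is protected, while simultaneously peeling off the trivial and Bell factors that Theorem \ref{connected} does not cover. The conceptual point worth stressing is that $A_c(L)$ itself need not be trivial — a GHZ factor contributes a positive-dimensional torus — yet this entire torus acts by phases on $C(L)$ and so collapses in the passage to $\overline A(L)$.
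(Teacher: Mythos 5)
Your proof is correct and follows essentially the same route as the paper, which deduces the corollary directly from Theorem \ref{connected} together with the observation (recorded before the corollary, from the proof of Lemma \ref{is_large}) that the per-class tori $\bigl\{{(g_p)}_{p\in S}\in{(C_z)}^S\,\big|\,\prod_{s\in S}g_s=1\bigr\}$ fix the code space pointwise, so that $A_c(L)\subset Z(L)$ and the compact group $\overline A(L)$ is discrete. Your explicit peeling-off of the trivial and Bell factors, which Theorem \ref{connected} excludes, merely fills in a step the paper leaves implicit.
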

\vspace*{12pt}
\section{Conclusion and examples}
\label{concl}
\noindent
We have shown that the automorphism group of a stabilizer code usually belongs to the class of solvable groups, and have thus given a precise and conceptual meaning to the notion that there are but few automorphisms. This in particular re-proves the theorem on the failure of transversal gates to form a universal set of encoded gates.

We have further computed the connected automorphism group of a stabilizer code or state. For the purpose of coding these automorphisms turn out to be of little help\ts: the transversal gates they induce can only act on gauge qubits, whose presence of course precludes any error correcting capacity of the code.

By contrast there are quite a few pure stabilizer states with non-trivial connected automorphism group. Apart from GHZ states in any number $|P|>0$ of qubits the simplest examples arise from a partition $P=S+T$ of the set of qubits with $|S|\ge2\le|T|$ and the Lagrangian
\begin{displaymath}
  \bigl\langle z_s\!+\!z_p\,|\,s\in S\bigr\rangle
   +\Bigl\langle\sum_{s\in S}x_s\!+\!z_q,\,
                z_p\!+\!\sum_{t\in T}x_t\Bigr\rangle
   + \bigl\langle z_t\!+\!z_q\,|\,t\in T\bigr\rangle,
\end{displaymath}
with arbitrarily chosen qubits $p\in S$ and $q\in T$ (all choices giving the same Lagrangian of course). Indeed the symmetry of the corresponding state
\begin{displaymath}
  |0\rangle\!\otimes\!|0\rangle+|0\rangle\!\otimes\!|1\rangle
 +|1\rangle\!\otimes\!|0\rangle-|1\rangle\!\otimes\!|1\rangle
 \in{\cal H}^{\otimes S}\otimes{\cal H}^{\otimes T}
\end{displaymath}
is clearly visible.

As was to be expected the existence of a non-trivial connected group of automorphisms of a stabilizer state is the exception rather than the rule\ts: most often the automorphism group is finite. An investigation of this finite group will be much more delicate, and quite different phenomena have to be taken into account, as is suggested by the following well-known example derived from Reed-Muller codes.

The shortened Reed-Muller code $R_\ast(r,m)$ is a classical binary code on the set $P=\bb F_2^m\sm\{0\}$, comprising all $v={(v_s)}_{s\in P}$ with the property that the function
\begin{displaymath}
  P\owns s\mapsto v_s\in\bb F_2
\end{displaymath}
is polynomial of degree at most $r$ and without constant term. The two codes $R_\ast(m\!-\!2,m)$ and $R_\ast(1,m)$ are orthogonal to each other, so that using the former in the $z$- and the latter in the $x$-co-ordinate space we obtain an isotropic subspace
\begin{displaymath}
  L:=R_\ast(1,m)\oplus R_\ast(m\!-\!2,m)\subset\bb F_2^P\oplus\bb F_2^P=V^P.
\end{displaymath}
The corresponding stabilizer code is an indecomposable Calderbank-Shor-Steane code which encodes just one qubit. For the standard choice of $\sigma\from L\to\tilde L$ the simultaneous rotation by $360/2^{m-1}$ degrees around each $z$ axis yields an automorphism $g\in A(L,\sigma)$ of order $2^{m-1}$, and for $m\ge4$ this $g$ is unique of this order up to multiplication by automorphisms of lower order (dividing $2^{m-1}$). This is in striking contrast to the presence of automorphisms with small support in the case of a non-trivial connected automorphism group. 
%
%
%
\bibliography{automorphisms}
\end{document}